\newcommand{\DelayTwo}{\eta}
\newcommand{\Def}{\mathrel{:=}}
\newcommand{\Assign}{\mathrel{:=}}
\newcommand{\Implies}{\Rightarrow}
\newcommand{\Iff}{\Leftrightarrow}
\newcommand{\True}{\mathbf{true}}
\newcommand{\False}{\mathbf{false}}
\newcommand{\Tuple}[1]{{\langle{#1}\rangle}}
\newcommand{\RealsPos}{\mathbb{R}_{+}}
\newcommand{\RealsNonNeg}{\mathbb{R}_{\geq 0}}
\newcommand{\Integers}{\mathbb{Z}}
\newcommand{\IntegersPos}{\mathbb{N}}
\newcommand{\Naturals}{\mathbb{N}}
\newcommand{\Set}[1]{\{#1\}}
\newcommand{\Models}[2]{{{#2} \models {#1}}}
\newcommand{\Next}[1]{#1'}
\newcommand{\System}{S}
\newcommand{\Init}{\mathit{Init}}
\newcommand{\Einit}{\widehat{\mathit{Init}}}
\newcommand{\Inv}{\mathit{Invar}}
\newcommand{\Invar}{\Inv}
\newcommand{\Einvar}{\widehat{\mathit{Invar}}}
\newcommand{\ESystem}{\widehat{\System}}
\newcommand{\Trans}{\ensuremath{T}}
\newcommand{\Etrans}{\widehat{T}}
\newcommand{\SVar}{x}
\newcommand{\SVarNext}{\Next{x}}
\newcommand{\SVars}{X}
\newcommand{\SVarsNext}{\Next{X}}
\newcommand{\Clock}{c}
\newcommand{\AnotherClock}{d}
\newcommand{\FixedClock}{\hat{c}}
\newcommand{\Clocks}{C}
\newcommand{\ClocksNext}{\Next{C}}
\newcommand{\Reset}[1]{r_{#1}}
\newcommand{\Resets}{R}
\newcommand{\Urgent}{U}
\newcommand{\STTS}{\Tuple{\SVars,\Clocks,\Init,\Inv,\Trans,\Resets}}
\newcommand{\Utinit}{\Init}
\newcommand{\Uttrans}{\Trans}
\newcommand{\Utinvar}{\Inv}
\newcommand{\UntimedSys}{\Tuple{\SVars,\emptyset,\Utinit,\Utinvar,\Uttrans,\emptyset}}
\newcommand{\ExpsttsTerm}{explicit STTS}
\newcommand{\aExpsttsTerm}{an explicit STTS}
\newcommand{\Expstts}{\Tuple{\SVars,\Clocks,\Einit,\Einvar,\Etrans}}
\newcommand{\Property}{P}
\newcommand{\Deltavar}{\delta}
\newcommand{\ForAllClocks}{\bigwedge_{\Clock\in\Clocks}}
\newcommand{\ForAnyClock}{\bigvee_{\Clock\in\Clocks}}
\newcommand{\ForAnyClockPair}{\bigvee_{\Clock\in\Clocks}\bigvee_{\AnotherClock\in\Clocks\setminus\Set{\Clock}}}
\newcommand{\ClockMax}[1]{m_{#1}}
\newcommand{\SameRegion}{\sim}
\newcommand{\NotSameRegion}{\not\sim}
\newcommand{\ClockValuation}{v}
\newcommand{\AnotherClockValuation}{w}
\newcommand{\EClass}[1]{{[#1]}}
\newcommand{\Index}{i}
\newcommand{\AnotherIndex}{j}
\newcommand{\Constant}{n}
\newcommand{\K}{k}
\newcommand{\Int}[1]{{#1}_\mathrm{int}}
\newcommand{\Frac}[1]{{#1}_\mathrm{fract}}
\newcommand{\AtMax}[1]{{max}_{#1}}
\newcommand{\AtStep}[2]{{#2^{[#1]}}}
\newcommand{\SameRegionState}{t}
\newcommand{\SeqState}[1]{t_{#1}}
\newcommand{\SeqFirst}{\SeqState{0}}
\newcommand{\SeqLast}{\SeqState{n}}
\newcommand{\SameRegionSeq}[1]{#1'}
\newcommand{\ThirdIndexKIndOptProof}{k}
\newcommand{\FSetNoIndex}{F}
\newcommand{\FSet}[1]{{\FSetNoIndex}_{#1}}
\newcommand{\ICTProof}{{Proof}}
\newcommand{\FDepth}{k}
\newcommand{\Asgn}{\leftarrow}
\newcommand{\Cube}{s}
\newcommand{\AnotherCube}{z}
\newcommand{\BlockFunc}{{blockState}}
\newcommand{\Queue}{Q}
\newcommand{\ClockValue}[1]{s(#1)}
\newcommand{\CFrac}[1]{\operatorname{fract(#1)}}
\newcommand{\CInt}[1]{\lfloor{#1}\rfloor}
\newcommand{\FracValue}[1]{\CFrac{\ClockValue{#1}}}
\newcommand{\IntValue}[1]{\CInt{\ClockValue{#1}}}
\newcommand{\ClockOrSVal}{y}
\newcommand{\Pred}{p}
\newcommand{\Delay}{\delta}
\newcommand{\NumSteps}{n}
\newcommand{\TPrec}{\precsim}
\newcommand{\Ceil}[1]{{\lceil{#1}\rceil}}
\newcommand{\Floor}[1]{{\lfloor{#1}\rfloor}}
\newcommand{\XRTimerb}{T_d}
\newcommand{\XTimera}{c}
\newcommand{\XTimerb}{d}
\newcommand{\XInput}{x_1}
\newcommand{\XOutput}{x_2}
\newcommand{\Th}{\mathcal{T}}
\newcommand{\Interp}{\mathcal{I}}
\newcommand{\Form}{\phi}
\newcommand{\Setdef}[2]{{\left\{{#1}\mid{#2}\right\}}}
\newcommand{\Eval}[2]{{#2}(#1)}
\newcommand{\Cubep}{u}
\newcommand{\Succ}{\longrightarrow}
\title{SMT-based Induction Methods for Timed Systems}
\author{Roland Kindermann \and Tommi Junttila \and Ilkka Niemel{\"a}}
\institute{Aalto University\\
Department of Information and Computer Science\\
P.O.Box 15400, FI-00076 Aalto, Finland\\
\email{Roland.Kindermann@aalto.fi, Tommi.Junttila@aalto.fi, Ilkka.Niemela@aalto.fi}%
}
\begin{document}

\maketitle
\begin{abstract} 
Modeling time related aspects is important in many applications of verification methods.
For precise results, it is necessary to interpret time as a dense domain, e.g. using timed automata as a formalism, even though the system's resulting infinite state space is challenging for verification methods.
Furthermore, fully symbolic treatment of both timing related and non-timing related elements of the state space seems to offer an attractive approach to model checking timed systems with a large amount of non-determinism.
This paper presents an SMT-based timed system extension to the IC3 algorithm, a SAT-based novel, highly efficient, complete verification method for untimed systems.
Handling of the infinite state spaces of timed system in the extended IC3 algorithm is based on suitably adapting the well-known region abstraction for timed systems.
Additionally, $k$-induction,
another symbolic verification method for discrete time systems,
is extended in a similar fashion to support timed systems.
Both new methods are evaluated and experimentally compared to
a booleanization-based verification approach that
uses the original discrete time IC3 algorithm.
\end{abstract}

\section{Introduction} 

In many application areas of model checking,
such as analysis of safety instrumented systems,
modeling and analyzing in the presence of
dense time constructions such as timers and delays is essential.
Compared to finite state systems,
such timed systems add an extra layer of challenge for model checking tools.
In many cases,
timed automata~\cite{DBLP:journals/tcs/AlurD94,DBLP:conf/cav/Alur99,DBLP:conf/ac/BengtssonY03}
are a convenient formalism for describing and model checking timed systems.
There are many tools,
Uppaal \cite{Uppaaltutorial:2004} to name just one, for timed automata
and
model checking algorithms for timed automata have been studied extensively
during the last two decades,
see e.g.~\cite{DBLP:conf/ac/BengtssonY03} for an overview.
Most state-of-the-art model checking systems for timed automata
use the so-called region abstraction to make a finite state
abstraction of the dense time clocks in the automata.
These regions are then manipulated symbolically with
difference bounded matrices or
decision diagram structures (see e.g.~\cite{Wang:STTT2004}).

In this paper our focus is on model checking of safety
instrumented systems (see e.g.~\cite{GruhnCheddie2006}).
Such systems  have features that are challenging for the classic
timed automata based approach described above.
First, safety instrumented systems do typically involve a substantial number of
timing related issues.
However, such systems are often not best described using automata-like control
structures but with a sequential circuit-like control logic.
This makes the use of timed automata rather inconvenient in modeling.
Second,
such systems tend to have a relatively large amount of non-deterministic input
signals which are computationally challenging for model checking tools
based on explicit state representation of
discrete components (i.e.~control location and data).

Hence,  we are interested in developing model checking techniques that
complement the automata based methods to address these issues. 
Instead of timed automata,
we use a more generic symbolic system description formalism~\cite{KindermannJunttilaNiemela:ACSD2011}
which can be seen as an extension of
the classic symbolic transition systems~\cite{MannaPnueli:1992}
with dense time clock variables and constraints.
In our previous work~\cite{KindermannJunttilaNiemela:ACSD2011},
we have experimented with
(i) SMT-based bounded model checking (BMC)~\cite{DBLP:conf/forte/AudemardCKS02,DBLP:journals/entcs/Sorea02},
and
(ii) BDD-based model checking based on booleanization of the region abstracted
model.
These methods were not totally satisfactory as
(i) BMC can, in practice, only find bugs, not prove correctness of the system,
and
(ii) the BDD-based method does not seem to scale well to realistically sized
models.

In order to address the computational challenge to develop 
model checking techniques that can handle timing as well as
a substantial amount of non-deterministic input signals and
prove correctness,
we turn to inductive techniques.
The motivation here is the success of temporal induction~\cite{SheeranEtAl:FMCAD2000,EenSorensson:2003}
and, especially,
of the IC3 algorithm~\cite{DBLP:conf/vmcai/Bradley11}
in the verification of finite state hardware systems.
%
Our approach is to employ SMT solvers instead of SAT solvers as the
basic constraint solver technology and
apply symbolic region abstraction to handle the dense time clocks in the models.
We extend IC3 to timed systems
by using linear arithmetics instead of propositional logic
and
by lifting the concrete states found by the SMT solver to
symbolic region level constraints
that are further used in the subsequent steps to constrain the search.
As a result we obtain a version of IC3 that does not explicitly
construct the symbolic region abstracted system
but still can exclude whole regions of states at once.
We also describe an SMT-based extension of the $\K$-induction algorithm
to these kinds of timed systems.
In addition,
we develop optimizations that allow us
to exclude more regions at a time in the SMT-based IC3 algorithm,
and
to use stronger ``simple path'' constraints in $\K$-induction.


Our experimental results indicate that SMT-based IC3 can indeed prove
much more properties and on much larger models than
were possible with our earlier approaches or with SMT-based timed $\K$-induction.
Furthermore,
when comparing to the approach of using the original propositional IC3
on booleanized region abstracted model,
we observe that using richer logics in the SMT framework
makes the IC3 algorithm scale much better for timed systems.
However, IC3 seems to perform worse than $\K$-induction (and thus BMC)
in finding counter-examples to properties that do not hold.
This is probably due to its backwards DFS search nature,
and leads us to the conclusion of recommending the use of a portfolio approach
combining SMT-based BMC and IC3 when model checking these kinds of
safety instrumented systems.

\newcommand{\IPred}{I}
\newcommand{\Reach}[1]{F_{#1}}
\newcommand{\ReachNext}[1]{F'_{#1}}
\newcommand{\Bound}{k}
\newcommand{\Prop}{P}

\section{Symbolic Timed Transition Systems and Regions} 

\label{s:stts}
\label{s:regions}

We model timed systems with symbolic timed transition systems
(STTS)~\cite{KindermannJunttilaNiemela:ACSD2011},
a generic formalism allowing modeling of arbitrary control logic structures,
data manipulation, and non-deterministic external inputs.
In a nutshell,
STTSs can be seen as symbolic transition systems
\cite{MannaPnueli:1992} extended with
real-valued clocks and associated constraints.
By using encoding techniques similar to those in
\cite{DBLP:conf/forte/AudemardCKS02,DBLP:journals/entcs/Sorea02},
timed automata (and networks of such) can be efficiently translated into STTS~\cite{KindermannJunttilaNiemela:ACSD2011}.

In the following,
we use standard concepts of propositional and first-order logics.
We assume typed (i.e., sorted) logics,
and
that formulas are interpreted modulo some background theories
(in particular, linear arithmetics over reals);
see e.g.~\cite{Handbook:SMT} and references therein.
%
%
%
If $Y = \{y_1,...,y_l\}$ is a set of variables
and $\phi$ 
formula over $Y$,
then
$Y' = \{y'_1,...,y'_l\}$ is the set of corresponding similarly
typed \emph{next-state variables}
and
$\phi'$ is obtained from $\phi$ by
replacing each variable $y_j$ with $y'_j$.
%
%
Similarly,
if $\psi$ is a formula over $Y \cup Y'$,
then, for each $\Index \in \Naturals$, the formula $\AtStep{\Index}{\psi}$
is obtained by replacing $y_j$ with $\AtStep{\Index}{y_j}$
and
$y'_j$ with $\AtStep{\Index+1}{y_j}$, of the same types.
For example, if $\psi = {(c'_2 \le c_1 + \delta) \land x'_1}$,
then $\AtStep{4}{\psi} = {(\AtStep{5}{c_2} \le \AtStep{4}{c_1} + \AtStep{4}{\delta}) \land \AtStep{5}{x_1}}$.
%
%

An STTS (or simply a system) is a tuple
$\Tuple{\SVars,\Clocks,\Init,\Invar,\Trans,\Resets}$,
where
\begin{itemize}
\item
  $\SVars = \Set{\SVar_1,...,\SVar_n}$ is a finite set of finite domain
  \emph{state variables},
\item
  $\Clocks = \Set{\Clock_1,...,\Clock_m}$ is a finite set of real-valued
  clock variables (or simply \emph{clocks}),
\item
  $\Init$ is a formula over $\SVars$ describing
  the initial states of the system,
\item
  $\Invar$ is a formula over $\SVars \cup \Clocks$ specifying
  a state invariant
  (throughout the paper, we assume the state invariants to be convex, as defined later),
\item
  $\Trans$ is the \emph{transition relation formula} over
  $\SVars \cup \Clocks \cup \SVarsNext$, and
\item
  $\Resets$ associates each clock $\Clock \in \Clocks$
  a  \emph{reset condition} formula $\Reset{\Clock}$
  over $\SVars \cup \Clocks \cup \SVarsNext$.
\end{itemize}
Like in timed automata context,
we require that in all the formulas 
in the system
the use of clock variables is restricted 
to
atoms 
of the form $\Clock \bowtie \Constant$, 
where $\Clock \in \Clocks$ is a clock variable, ${\bowtie} \in \{<,\le,=,\ge,>\}$ and $\Constant \in \Integers$.
Observe that,
as in the timed automata context as well,
one could use rational constants in systems and
then scale them to integers in a behavior and property preserving way.
A system is \emph{untimed} if it does not have any clock variables.
For the sake of readability only, we do not consider the so-called
urgency constraints~\cite{KindermannJunttilaNiemela:ACSD2011} in this paper.

The semantics of an STTS is defined by its states and
how they may evolve to others.
A \emph{state} is simply an interpretation over $\SVars \cup \Clocks$.
A state $\Cube$ is \emph{valid} if it respects the state invariant,
i.e.~$\Models{\Invar}{\Cube}$. 
A state $\Cube$ is an \emph{initial state} if it is valid,
$\Models{\Init}{\Cube}$,
and
$\Cube(\Clock)=0$ for each clock $\Clock \in \Clocks$.
Given a state $\Cube$ and $\Delay \in \RealsNonNeg$,
we denote by $\Cube + \Delay$ the state where clocks have increased by $\Delay$,
i.e.~%
$(\Cube+\Delay)(\Clock) = \Cube(\Clock)+\Delay$ for each clock $\Clock \in \Clocks$
and
$(\Cube+\Delay)(\SVar) = \Cube(\SVar)$ when $\SVar \in \SVars$.
A valid state $\Cube$ may evolve into a successor state $\Cubep$,
denoted by $\Cube \Succ \Cubep$,
if $\Cubep$ is also valid and either of the following holds:
\begin{enumerate}
\item
  \emph{Discrete step}:
  (i) the current and next state interpretations
  evaluate the transition relation to true,
  i.e.\ $\Models{\Trans}{\gamma}$
  where $\gamma(y) = \Cube(y)$ when $y \in {\SVars \cup \Clocks}$
  and $\gamma(\SVarNext) = \Cubep(\SVar)$ when $\SVarNext \in \SVarsNext$,
  and
  (ii)
  each clock either resets or keeps its value:
  for each clock $\Clock \in \Clocks$,
  $\Cubep(\Clock) = 0$ if $\Models{\Reset{\Clock}}{\gamma}$
  and $\Cubep(\Clock) = \Cube(\Clock)$ otherwise.
\item
  \emph{Time elapse step}:
  (i) some amount of time elapses:
  $\Cubep = \Cube+\Delay$ for some $\Delay \in \RealsNonNeg$,
  and
  (ii) the state invariant is respected in the states in between:
  $\Cube + \mu$ is valid for all $0 < \mu \le \Delay$.
\end{enumerate}
A path is a finite sequence $\Cube_0 \Cube_1 ... \Cube_l$
of states such that $\Cube_i \Succ \Cube_{i+1}$ holds for each consecutive
pair of states in the path.
%
%
A state is reachable if there is a path from an initial state to that state.
A \emph{property} $\Property$ is a formula over the state variables
$\SVars$ and the clock variables $\Clocks$, adhering to the same restrictions
on the use of clock variables as the system's formulas.
In this paper we are interested in solving the problem
whether the given state property $\Property$ is an invariant,
i.e.~whether $\Property$ holds in all the reachable states of the system.

As in the contexts of timed automata and linear hybrid automata,
we require the state invariants in STTSs to be convex.
Formally, a state invariant is \emph{convex}
if for all states $\Cube$ and for all $0 \le \DelayTwo \le \Delay$
it holds that whenever
$\Models{\Invar}{\Cube}$ and $\Models{\Invar}{(\Cube+\Delay)}$,
also $\Models{\Invar}{(\Cube+\DelayTwo)}$.
%
Thus, a state invariant cannot become false and then true again during
a time-elapse step, making condition (ii) of time-elapse steps to always hold.
Convexity is easy to test with one call to an SMT solver.


\begin{example}\label{ex:stts}
  As an example,
  consider an STTS modeling the timer $\XRTimerb$ in
  the safety instrumented system in Fig.~\ref{f:sis}(a).
  The STTS has the clock variable $\XTimerb$ which is reset
  when a discrete step makes the signal $\XInput$ true,
  i.e.~$\Reset{\XTimerb} = (\neg\XInput \land \XInput')$,
  corresponding to the activation of the timer.
  The output signal $\XOutput$ is initially false,
  i.e.~$\Init$ contains the conjunct $(\neg\XOutput)$.
  It changes to true when the signal $\XInput$ does
  and
  then stays true for two seconds.
  These properties are captured by the conjunct
  $(\XOutput' \Iff (\neg\XInput \land \XInput')\lor(\XOutput\land(\XTimerb<2)))$
  in the transition relation $\Trans$.
  To force the timer output to be reset after two seconds,
  $\Inv$ contains the conjunct
  $(\XOutput \Implies (\XTimerb \le 2))$.
  %
\end{example}

\begin{figure}[t]
  \centering
  \begin{tabular}{l@{\qquad\qquad}l}
    \includegraphics[width=.5\textwidth]{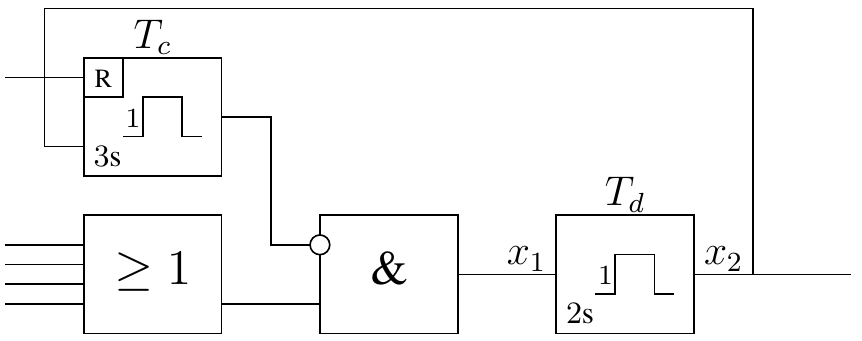}
    &
    \includegraphics[width=.25\textwidth]{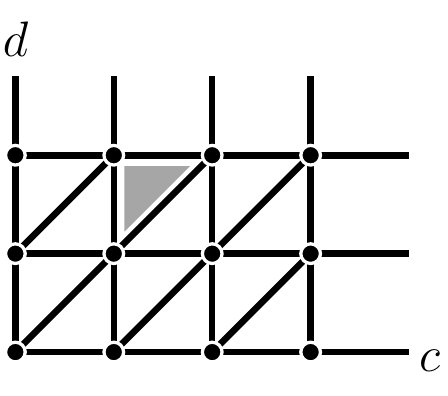}
    \\
    (a) A part of a timed safety instrumented system &
    (b) Clock regions
  \end{tabular}
  \caption{Illustrations of safety instrumented systems and regions.}
  \label{f:sis}
  \label{f:regions}
\end{figure}

\paragraph{Regions.}
A conceptual tool for handling the infinite state space of an STTS
is the region abstraction~\cite{DBLP:journals/tcs/AlurD94}.
%
%
For a non-negative real number $a \in \RealsNonNeg$,
let $\CFrac{a}$ be its fractional part,
i.e.~$a = \Floor{a} + \CFrac{a}$ and $0 \le \CFrac{a} < 1$.
%
%
Let $\ClockValuation$ be an interpretation over $\Clocks$
(also called a \emph{clock valuation}).
%
%
Furthermore,
let $\ClockMax{\Clock}$ be the maximum (relevant) value of the clock $\Clock$,
i.e.\ the largest constant that $\Clock$ is compared to in $\Invar$, $\Trans$, $\Resets$ or $\Property$.
Two clock valuations $\ClockValuation$ and $\AnotherClockValuation$
belong to the same equivalence class called \emph{region},
denoted by $\ClockValuation \SameRegion \AnotherClockValuation$,
if for all clocks $\Clock, \AnotherClock \in \Clocks$
\begin{enumerate}
\item
  either
  (i) $\CInt{\ClockValuation(\Clock)} = \CInt{\AnotherClockValuation(\Clock)}$
  or
  (ii)
  $\ClockValuation(\Clock) > \ClockMax{\Clock}$ and
  $\AnotherClockValuation(\Clock) > \ClockMax{\Clock}$;
\item
  if $\ClockValuation(\Clock) \leq \ClockMax{\Clock}$,
  then
  $\CFrac{\ClockValuation(\Clock)} = 0$ iff
  $\CFrac{\AnotherClockValuation(\Clock)} = 0$;
   and
\item
  if
  $\ClockValuation(\Clock) \leq \ClockMax{\Clock}$ and
  $\ClockValuation(\AnotherClock) \leq \ClockMax{\AnotherClock}$,
  then 
  $\CFrac{\ClockValuation(\Clock)} \leq \CFrac{\ClockValuation(\AnotherClock)}$
  iff
  $\CFrac{\AnotherClockValuation(\Clock)} \leq \CFrac{\AnotherClockValuation(\AnotherClock)}$.
\end{enumerate}
Figure~\ref{f:regions}(b) illustrates the region abstraction for
an STTS with two clocks, $\XTimera$ and $\XTimerb$,
with $\ClockMax{\XTimera} = 3$ and $\ClockMax{\XTimerb} = 2$.
The thick black lines, thick black dots and the areas in between
the thick black lines each represent a different region.
The region in which
$\CInt{\ClockValuation(\XTimera)} = \CInt{\ClockValuation(\XTimerb)} = 1$
and
$0 < \CFrac{\ClockValuation(\XTimera)} < \CFrac{\ClockValuation(\XTimerb)}$
is highlighted in gray.

Two states, $\Cube$ and $\Cubep$,
are in the same region,
denoted by $\Cube \SameRegion \Cubep$,
if they agree on the values of the state variables and
are in the same region when restricted to clock variables.

Due to the restrictions imposed on the use of clock variables,
states in the same region are
(i) indistinguishable for predicates, meaning that
$\Models{\Init}{\Cubep}$ iff $\Models{\Init}{\Cube}$,
$\Models{\Invar}{\Cubep}$ iff $\Models{\Invar}{\Cube}$, and
$\Models{\Property}{\Cubep}$ iff $\Models{\Property}{\Cube}$
whenever $\Cube \SameRegion \Cubep$, and
(ii)
forward bisimilar:
if $\Cube \Succ \Cube'$ and $\Cube \SameRegion \Cubep$,
then there exists a $\Cubep'$ such that
$\Cubep \Succ \Cubep'$ and $\Cube' \SameRegion \Cubep'$.


\paragraph{Formula Representation with Combined Steps.}
To simplify the exposition, to reduce amount of redundancy in paths,
and to enable some optimizations,
we introduce a \emph{formula representation} for STTSs
that exploits a well-known observation:
for reachability checking,
it is enough to consider paths where discrete steps and
time elapse steps alternate, as
two consecutive time elapse steps can be merged into one and zero duration time elapse steps can be added in between discrete steps.
For a  given STTS $\STTS$,
we define the following formulas:
\begin{itemize}
\item
 $\Einvar \Def {\Invar \wedge \ForAllClocks \Clock \geq 0}$.
 Now $\Models{\Einvar}{\Cube}$ for a state $\Cube$ iff
 $\Cube$ is a valid state and all clock values are non-negative.
\item
 $\Einit \Def \Init \wedge \bigwedge_{\Clock\in\Clocks} {\Clock = \FixedClock}$
 for a free real-valued variable $\FixedClock$.
 Now $\Models{\Einit}{\Cube}$ iff,
 forgetting the state validity requirements,
 $\Cube$ is a state reachable from an initial state with
 time elapse steps only.
\item
 $\Etrans \Def
  \Trans \land {\Deltavar \geq 0} \land
  {\ForAllClocks (\Reset{\Clock} \Implies {\Next{\Clock} = \Deltavar})} \land
  {\ForAllClocks ({\neg \Reset{\Clock}} \Implies {\Next{\Clock} = \Clock + \Deltavar})}$.
 Thus, a state $\Cubep$ is reachable from a state $\Cube$
 with one discrete step followed by one time elapse step
 iff
 $\Models{\Etrans}{\pi}$ for the valuation $\pi$ on
 $\SVars \cup \Clocks \cup \SVarsNext \cup \ClocksNext$
 mapping each $z \in {\SVars \cup \Clocks}$ to $\Cube(z)$
 and
 each $\Next{z} \in {\SVarsNext \cup \ClocksNext}$ to $\Cubep(z)$.
\end{itemize}

\section{$k$-Induction for Timed Systems} 

\label{s:k-ind}

The $\K$-induction method \cite{SheeranEtAl:FMCAD2000,EenSorensson:2003}
inductively proves a reachability property for a system or
discovers a counter-example while trying to prove the property.
In the following, we will extend $\K$-induction,
which was originally proposed as a verification method for finite-state systems,
to a complete verification method for STTS.

As the base case of an inductive proof, $\K$-induction shows that no bad
state can be reached within $\K$ steps starting from an initial state
for some $\K \in \Naturals$. As the inductive step, $\K$-induction shows that it is impossible under the transition relation of the system to have a path consisting for $\K$ good (property-satisfying) states followed by a bad (property-violating) state. Together, base case and inductive step prove that the property holds in any reachable state.

For an untimed system $\UntimedSys$,
both the base case and inductive step can be proven using a SAT solver.
The base case holds iff the formula
$\AtStep{0}{\Utinit}
\land
\bigwedge_{\Index = 0}^{\K} \AtStep{\Index}{\Invar}
\land
\bigwedge_{\Index = 0}^{\K-1}\AtStep{\Index}{\Trans}
\land
\bigwedge_{\Index = 0}^{\K-1}\AtStep{\Index}{\Property}
\land
\neg\AtStep{\K}{\Property}$
is unsatisfiable. Likewise, the inductive step holds iff the formula
$\bigwedge_{\Index = 0}^{\K} \AtStep{\Index}{\Invar}
\wedge \bigwedge_{\Index = 0}^{\K-1}\AtStep{\Index}{\Trans}
\wedge \bigwedge_{\Index = 0}^{\K-1}\AtStep{\Index}{\Property}
\wedge \neg\AtStep{\K}{\Property}$
is unsatisfiable.
Initially, $\K$-induction attempts an inductive proof with $\K=0$. If unsuccessful, $\K$ is increased until the inductive proof succeeds or a counter-example is found while checking the base case.
Note that the large overlap both between the formulas for checking base case and inductive step and between the checks before and after increasing $\K$ can be exploited by incremental SAT solvers \cite{EenSorensson:2003}.

While correct, the described approach is not complete due to the fact that the induction step is not guaranteed to hold even if the property checked is satisfied by the system. $\K$-induction can, however, be made complete for finite-state systems by only considering simple (non-looping) paths when checking the inductive step.
The most straightforward way to enforce paths to be simple, is to add a quadratic number disequality constraints to the SAT formula, requiring any pair of states to be distinct. Experimental evidence, however, suggests that it is beneficial to only add disequality constraints for pairs of states for which it is observed that disequality constraints are needed \cite{EenSorensson:2003}. 
%

\paragraph{$\K$-induction for STTS.}
Both base case and inductive step formulas can be applied to an STTS $\STTS$ simply by replacing $\Utinit$, $\Uttrans$ and $\Utinvar$ in these formulas by $\Einit$, $\Etrans$ and $\Einvar$ and using an SMT solver instead of a SAT solver.
However, unlike for untimed systems,
termination is not even guaranteed when adding disequality constraints.
For untimed systems,
disequality constraints guarantee termination due to the fact that
in a finite state system, there are no simple paths of infinite length 
and, thus, the simple path inductive step check is guaranteed to be unsatisfiable with sufficiently large~$\K$.
Timed systems, in contrast, typically have 
no upper bound for the length of a simple path and, thus, disequality constraints are not sufficient for completeness.
%
However, the infinite state space of an STTS can be split into a finite number of regions.
%
Thus, any reasoning made for finite state systems can be applied to regions of states.
%
In particular, $\K$-induction is complete and correct
when only paths that do not visit two states belonging to the same region are considered in the inductive step \cite{deMouraEtAl:CAV2003}.
By enforcing this property on inductive step paths using region-disequality
constraints,
complete $\K$-induction can be performed using $\Einit$, $\Etrans$ and $\Einvar$ (almost) without modification.

In order to specify that two states of an STTS belong to different regions,
region-disequality constraints need to individually constrain the integer and
fractional parts of clock values.
As only some SMT-solvers, such as Yices \cite{DBLP:conf/cav/DutertreM06},
allow referring to integer and fractional parts of real-valued variables,
we provide a region-disequality constraint encoding that does not rely on
such a feature.\footnote{In \cite{KindermannJunttilaNiemela:submitted} we give an alternative encoding for region-disequality constraints in a BMC setting.}
Instead, we split each clock variable $\Clock$ into two variables:
$\Int{\Clock}$ represents the integer
and
$\Frac{\Clock}$ the fractional part of $\Clock$'s value.
This ``splitting of clocks'' requires rewriting of
$\Einit$, $\Etrans$ and $\Einvar$ by
replacing each atom involving a clock with
a formula as follows:

\begin{center}
 \footnotesize
   \begin{tabularx}{\textwidth}{| l | X |c| l | X |}
    \cline{1-2} \cline{4-5}
    Atom & Replacement, $n \in \Naturals$ &&
    Atom & Replacement, $n \in \Naturals$
    \\
    \cline{1-2} \cline{4-5}
    \cline{1-2} \cline{4-5}
    $\Clock < \Constant$ & $\Int{\Clock} < \Constant$
    &&
    $\Clock \leq \Constant$ & $\Int{\Clock} < \Constant \vee (\Int{\Clock} = \Constant \wedge \Frac{\Clock} = 0)$
    \\
    \cline{1-2} \cline{4-5}
    $\Clock > \Constant$ & $\Int{\Clock} > \Constant \vee (\Int{\Clock} = \Constant \wedge \Frac{\Clock} > 0)$
    &&
    $\Clock \geq \Constant$ & $\Int{\Clock} > \Constant$
    \\
    \cline{1-2} \cline{4-5}
    $\Clock = \Constant$ & $\Int{\Clock} = \Constant \wedge \Frac{\Clock} = 0$
    &&
    $\Clock = \FixedClock$ & $\Int{\Clock} = \Int{\FixedClock} \wedge \Frac{\Clock} = \Frac{\FixedClock}$
    \\
    \cline{1-2} \cline{4-5}
    $\Next{\Clock} = \Deltavar$ & $\Int{\Clock} = \Int{\Deltavar} \wedge \Frac{\Clock} = \Frac{\Deltavar}$ 
    &
    \multicolumn{3}{ c }{}
    \\ 
    \hline
    $\Next{\Clock} = \Clock + \Deltavar$ & \multicolumn{4}{ p{9.5cm} |}{$%
      ((\Frac{\Clock} + \Frac{\Deltavar} < 1) \Implies  %
      (\Next{\Int{\Clock}} = \Int{\Clock} + \Int{\Deltavar} \wedge \Next{\Frac{\Clock}} = \Frac{\Clock} + \Frac{\Deltavar} )) %
      \wedge %
      ((\Frac{\Clock} + \Frac{\Deltavar} \geq 1) \Implies %
      (\Next{\Int{\Clock}} = \Int{\Clock} + \Int{\Deltavar} + 1 \wedge \Next{\Frac{\Clock}} = \Frac{\Clock} + \Frac{\Deltavar} - 1)) %
    $} \\
    \hline
  \end{tabularx}
\end{center}

\newcommand{\DiffRegions}[2]{\mathit{DiffRegion}^{[#1,#2]}}
\newcommand{\DiffRegionsPrec}[2]{\mathit{DiffRegion}_{\precsim}^{[#1,#2]}}
Then,
two states with indices $\Index$ and $\AnotherIndex$ can be forced to
be in different regions by the following
region-disequality constraint $\DiffRegions{\Index}{\AnotherIndex}$:
\begin{small}
\begin{eqnarray*}
\bigvee_{\SVar \in \SVars} \AtStep{\Index}{\SVar} \neq \AtStep{\AnotherIndex}{\SVar}
& \lor &
\ForAnyClock ({\AtStep{\Index}{\Int{\Clock}} \neq \AtStep{\AnotherIndex}{\Int{\Clock}}} \land (\neg\AtStep{\Index}{\AtMax{\Clock}} \lor \neg\AtStep{\AnotherIndex}{\AtMax{\Clock}}))
\\
& \lor &
\ForAnyClock
({\neg\AtStep{\Index}{\AtMax{\Clock}}} \land
 \neg(\AtStep{\Index}{\Frac{\Clock}} = 0 \Iff
      \AtStep{\AnotherIndex}{\Frac{\Clock}} = 0)
)
\\
& \lor &
\ForAnyClockPair(
        {\neg\AtStep{\Index}{\AtMax{\Clock}}}
  \land {\neg\AtStep{\Index}{\AtMax{\AnotherClock}}}
  \land {\neg((\AtStep{\Index}{\Frac{\Clock}} \le \AtStep{\Index}{\Frac{\AnotherClock}})
	      \Iff
              (\AtStep{\AnotherIndex}{\Frac{\Clock}} \le \AtStep{\AnotherIndex}{\Frac{\AnotherClock}}))}
)
\end{eqnarray*}
\end{small}%
%
%
%
where the shorthand
$\AtStep{\Index}{\AtMax{\Clock}} \Def
 {{\AtStep{\Index}{\Int{\Clock}} > \ClockMax{\Clock}} \lor (\AtStep{\Index}{\Int{\Clock}} = \ClockMax{\Clock} \land \Frac{\Clock} > 0)}$
detects whether the clock $\Clock$ exceeds its maximum relevant value $\ClockMax{\Clock}$.

\section{IC3 for Timed Systems} 

\label{s:ic3}

In this section,
we first describe the IC3 algorithm~\cite{DBLP:conf/vmcai/Bradley11}
for untimed finite state systems (see also~\cite{een2011efficient}
for an alternative, complementary account of the algorithm).
We then show how it can be extended for verifying timed systems
by using region abstraction and SMT solvers.


Like $\K$-induction,
the IC3 algorithm tries to generate an inductive proof for
a given state property $\Property$ on an untimed system
$\System = \UntimedSys$.
But unlike the unrolling-based approach used by $\K$-induction,
proofs generated by the IC3 algorithm only consists of a single
formula $\ICTProof$ satisfying three properties:
(a) $\ICTProof$ is satisfied by any initial state of $\System$,
(b) $\ICTProof$ is satisfied by any successor of any state satisfying $\ICTProof$,
and
(c) $\ICTProof \Implies \Property$.
Properties (a) and (b) serve as base case and inductive step for showing that
the set of states satisfying $\ICTProof$ is an over-approximation of
the states reachable in $\System$ while
property (c) proves that any reachable state satisfies $\Property$.

In order to generate a proof,
the IC3 algorithm builds a sequence of sets of formulas
$\FSet{0}\ldots\FSet{\FDepth}$ satisfying certain properties.
Eventually, one of these sets becomes the proof $\ICTProof$.
Each $\FSetNoIndex$-set represents the set of states satisfying all its formulas.
The properties satisfied by the sequence are
(i) $\Init \wedge \Invar \Implies \FSet{0}$,
(ii) $\FSet{\Index} \Implies \FSet{\Index + 1}$
,
(iii) $\FSet{\Index} \Implies \Property$
, and
(iv) $\FSet{\Index} \wedge \Invar \wedge \Trans \wedge \Next{\Invar} \Implies \Next{\FSet{\Index + 1}}$.
The basic strategy employed by the IC3 algorithm is to add clauses to the $\FSet{\Index}$-sets in a fashion that keeps properties (i) to (iv) intact until $\FSet{\FDepth} \wedge \Invar \wedge \Trans \wedge \Next{\Invar} \Implies \Next{\Property}$.
In this situation, $\FDepth$ can be increase by appending 
$\Set{\Property}$ to the sequence.
The algorithm terminates once $\FSet{\Index} = \FSet{\Index+1}$ for some $\Index$ and provides 
 $\FSet{\Index}$ as a proof.
Upon termination, properties (i) and (ii) imply proof-property (a), property~(iv) and the termination condition $\FSet{\Index} = \FSet{\Index+1}$ imply property (b) and property (iii) implies property (c).
Note that, in practice, property (ii) is enforced by adding any formula added to a given $\FSetNoIndex$-set also to all $\FSetNoIndex$-sets with lower index, i.e. $\FSet{\Index} \subseteq \FSet{\Index-1}$

\begin{figure}[tp]
\begin{algorithmic}[1]
\LOOP
	\IF{$\FSet{\FDepth} \wedge \Invar \wedge \Trans \wedge \Next{\Invar} \wedge \neg \Next{\Property}$ is UNSAT} \label{l:ic3.main.sat}
		\STATE $\FDepth \Assign \FDepth + 1$ \label{l:ic3.main.extend.first}
		\STATE add $\FSet{\FDepth} \Asgn \{\Property\}$ to sequence of $\FSetNoIndex$-sets \label{l:ic3.main.extend.last}
		\STATE propagate() \label{l:ic3.main.propagate}
		\IF{$\FSet{\Index} = \FSet{\Index+1}$ for some $\Index$} \label{l:ic3.main.termination}
			\RETURN $\True$ \COMMENT {Property holds}
		\ENDIF
	\ELSE
		\STATE $\Cube \Asgn$ predecessor of a bad state extracted from the model
		\STATE $success \Asgn \BlockFunc(\Cube)$
		\IF{$\neg success$}
			\RETURN $\False$ \COMMENT {Property violated}
		\ENDIF
	\ENDIF
\ENDLOOP
\end{algorithmic}
\caption{The main loop of IC3}
\label{f:ic3.main}
\end{figure}

After sketching the basic strategy, we will now take a closer look at the algorithm.
Note, however, that the description given is only a simplified version of the algorithm that focuses on the aspects that are relevant with respect to extending it for STTS.
Figure~\ref{f:ic3.main} shows the main loop of the IC3 algorithm. In each iteration, the algorithm first checks whether or not it is currently possible to extend the sequence of $\FSetNoIndex$-sets by appending $\Property$. Note that as appending $\Property$ will never result in properties (i) to (iii) being violated, it is sufficient to check whether extending the 
sequence would violate property~(iv). A corresponding SAT call can be found in Line~\ref{l:ic3.main.sat} of Fig.~\ref{f:ic3.main}.
If the SAT call indicates that the sequence can safely extended, the sequence is extended in Lines~\ref{l:ic3.main.extend.first} and~\ref{l:ic3.main.extend.last}.
In the next step,
the $\FSetNoIndex$-set sequence,
clauses may be propagated from $\FSetNoIndex$-sets to subsequent sets in the sequence.
While this step is vital for termination, a more detailed description is omitted here for space limitations.
After propagation, 
the algorithm's termination condition is checked in Line~\ref{l:ic3.main.termination}.

Of course, the SAT check in Line~\ref{l:ic3.main.sat} may as well indicate that the $\FSetNoIndex$-sequence may currently not be extended without violating property~(iv).
In this case, a state $\Cube$ that satisfies $\FSet{\FDepth}$ and has a bad successor can be extracted from the model returned by the SAT solver.
As $\Cube$ prevents the sequence from being extended, the algorithm attempts to drop $\Cube$ from (the set of states represented by) $\FSet{\FDepth}$ by
adding a clause that implies $\neg\Cube$\footnote{In a slight abuse of notation, we interpret a state $\Cube$ as formula $\bigwedge_{\ClockOrSVal \in \Clocks \cup \SVars} \ClockOrSVal = \Eval{\ClockOrSVal}{\Cube}$ where appropriate.}.
The corresponding subroutine call,
$\BlockFunc(\Cube)$,
may also need to add further clauses also to other $\FSetNoIndex$-sets
than $\FSet{\FDepth}$ in order to ensure that properties of
the sequence remain satisfied.

\begin{figure}[tp]
\begin{algorithmic}[1]
\STATE $\Queue \Asgn$ priority queue containing $\Tuple{\Cube, \FDepth}$
\WHILE{$\Queue$ not empty}
	\STATE $\Cube, \Index \Asgn \Queue.popMin()$
	\IF{$\Index = 0$} \label{ic3.block.counter-example} \label{l:ic3.block.counter-example}
		\RETURN $\False$ \COMMENT{Counter-example found}
	\ENDIF
	\IF{$\FSet{\Index-1}\wedge\neg\Cube\wedge\Trans\wedge\Invar\wedge\Next{\Invar}\wedge\Next{\Cube}$ is SAT} \label{l:ic3.block.sat}
		\STATE $\AnotherCube \Asgn$ predecessor of $\Cube$ extracted from the model
		\STATE $\Queue.add(\Tuple{\Cube, \Index})$ \label{l:ic3.block.readd}
		\STATE $\Queue.add(\Tuple{\AnotherCube, \Index - 1})$ \label{l:ic3.block.addnew}
	\ELSE
		\STATE $\FSet{\Index}.add(generalize(\neg \Cube))$ \label{l:ic3.block.addclause}
		\IF{$\Index < \FDepth$}
			\STATE $\Queue.add(\Tuple{\Cube, \Index + 1})$
		\ENDIF
	\ENDIF
\ENDWHILE
\RETURN $\True$
\end{algorithmic}
\caption{The $\BlockFunc(\Cube)$ sub-routine}
\label{f:ic3.block}
\end{figure}

The $\BlockFunc(\Cube)$ subroutine, outlined in Figure~\ref{f:ic3.block}, operates on a list of proof obligation, each being a pair of a state and an index.
An obligation $\Tuple{\Cube, \Index}$ indicates that it is necessary to drop $\Cube$ from $\FSet{\Index}$ before the main loop of the algorithm can continue.
Initially, the only proof obligation is to drop the state provided as an argument from $\FSet{\FDepth}$.
For any proof obligation $\Tuple{\Cube, \Index}$, the $\BlockFunc$ subroutine in Line~\ref{l:ic3.block.sat} checks whether or not $\Cube$ has a predecessor $\AnotherCube$ in $\FSet{\Index - 1}$.
Such a predecessor prevents $\Cube$ from being excluded from $\FSet{\Index}$ without violating property~(iv).
Thus, if 
a predecessor is found, the obligation $\Tuple{\Cube, \Index}$ can not be fulfilled immediately and is added to the set of open obligations again in Line~\ref{l:ic3.block.readd}.
Furthermore, $\AnotherCube$ has to be excluded from $\FSet{\Index - 1}$ before $\Cube$ can be excluded from $\FSet{\Index}$.
This is reflected by the obligation $\Tuple{\AnotherCube, \Index-1}$ also being added to the set of open obligations in Line~\ref{l:ic3.block.addnew}.

If the SAT call in Line~\ref{l:ic3.block.sat} is unsatisfiable, then $\Cube$ has no predecessor in $\FSet{\Index-1}$ and can safely be excluded from $\FSet{\Index}$ without violating property~(iv). The state $\Cube$ is excluded by adding a generalization of the clause $\neg \Cube$ to $\FSet{\Index}$. More precisely, the algorithm attempts to drop literals from 
$\neg \Cube$ in a way that preserves properties (i) and (iv) before adding the resulting clause to $\FSet{\Index}$. Without this generalization step, states would be excluded one at a time from the $\FSetNoIndex$-sets resulting in a method akin to explicit state model checking.

So far, it has been assumed that $\Property$ holds. If this is not the case, then the main loop will eventually pass a predecessor of a bad state reachable in $\System$ to $\BlockFunc$. In such a situation,  $\BlockFunc$  essentially performs a backwards depth-first search that eventually leads to an initial state of $\System$, which is detected in Line~\ref{l:ic3.block.counter-example}. Note that it is straightforward to extract a counter-example from the proof obligations if it is detected that $\Property$ does not hold.

Note that, while sufficient for explain our extensions, only a simplified version of the IC3 algorithm has been described. Most notably, the complete version of the algorithm additionally aims to satisfy proof obligations for multiple successive $\FSetNoIndex$-sets at a time if possible and performs generalization based on unsatisfiable cores obtained from SAT calls in various locations. For a description of these techniques as well as complete arguments for correctness and completeness of the approach refer to \cite{DBLP:conf/vmcai/Bradley11,een2011efficient}.



\newcommand{\RCube}{\tilde{s}}
\newcommand{\TRCube}{\tilde{s}_{\precsim}}

\paragraph{Extending IC3 for timed systems.}
As was the case with $\K$-induction,
the key to extending the IC3 algorithm to timed systems is
the region abstraction.
Again,
we will use an SMT-solver instead of a SAT-solver
and the combined step encoding $\Einit$, $\Einvar$, and $\Etrans$ on an STTS will replace $\Init$, $\Invar$ and $\Trans$.
%
%
To operate on the region level,
we lift each concrete state in a satisfying interpretation
returned by the SMT solver
into to the region level in the IC3 algorithm code
whenever it is passed back to the SMT solver again.
%
%
To do this,
given a state $\Cube$,
we construct a conjunction $\RCube$ of atoms
such that $\RCube$ represents all the states in the same region
as $\Cube$,
i.e.~for any state $\Cubep$ it holds
$\Models{\RCube}{\Cubep}$ iff $\Cubep \SameRegion \Cube$.
Formally,
$\RCube$ is the conjunction of the atoms given by the following rules:
\begin{enumerate}
\item
  For each state variable $\SVar \in \SVars$,
  add the atom $(\SVar = \Cube(\SVar))$.
\item
  For each clock $\Clock$ with
  $\ClockValue{\Clock} > \ClockMax{\Clock}$,
  add the atom $(\Clock > \ClockMax{\Clock})$.
\item
  For each clock $\Clock$ with
  $\ClockValue{\Clock} \le \ClockMax{\Clock}$
  and
  $\FracValue{\Clock} = 0$,
  add the atoms
  $(\Clock \le \ClockValue{\Clock})$ and
  $(\Clock \ge \ClockValue{\Clock})$.
  Two atoms are added instead of $(\Clock = \ClockValue{\Clock})$
  so that the clause generalization sub-routine has more possibilities
  for relaxing $\neg\RCube$.
\item
  For each clock $\Clock$ with
  $\ClockValue{\Clock} < \ClockMax{\Clock}$
  and
  $\FracValue{\Clock} \neq 0$,
  add the atoms
  $(\Clock > \Floor{\ClockValue{\Clock}})$ and
  $(\Clock < \Ceil{\ClockValue{\Clock}})$.
\item
  For each pair $\Clock,\AnotherClock$ of distinct clocks
  with
  $\ClockValue{\Clock} \le \ClockMax{\Clock}$, and
  $\ClockValue{\AnotherClock} \le \ClockMax{\AnotherClock}$,
  \begin{enumerate}
  \item
    if $\FracValue{\Clock} = \FracValue{\AnotherClock}$,
    add the atoms
    $(\AnotherClock \le {\Clock-\IntValue{\Clock}+\IntValue{\AnotherClock}})$
    and
    $(\AnotherClock \ge {\Clock-\IntValue{\Clock}+\IntValue{\AnotherClock}})$,
    and
  \item
    if $\FracValue{\Clock} < \FracValue{\AnotherClock}$,
    add the atom
    $(\AnotherClock > {\Clock-\IntValue{\Clock}+\IntValue{\AnotherClock}})$.
  \end{enumerate}
\end{enumerate}
What is especially convenient here is that,
unlike in the region-disequality constraints in $\K$-induction,
there is no need to directly access the integral and fractional parts
of clock variables in $\RCube$
 because $\RCube$ considers one fixed region.
Indeed, all the atoms concerning clock variables will fall in
the difference logic fragment of linear arithmetics over reals,
having very efficient decision procedures available~\cite{NieuwenhuisOliveras:CAV2005,CottonMaler:SAT2006}.

We now let the IC3 algorithm operate as in the untimed case
except the satisfiability calls are changed to operate on the region level.
Especially,
the formula in Line \ref{l:ic3.block.sat} of Fig.~\ref{f:ic3.block}
is modified to 
$\FSet{\Index-1} \land {\neg\RCube} \land \Trans \land \Invar \land \Next{\Invar} \land \Next{\RCube}$
so that it operates on the region level,
trying to find a predecessor state in the $\FSet{\Index-1}$-set
for \emph{any} state in the same region as $\Cube$.
Furthermore,
in Line~\ref{l:ic3.block.addclause} the clause generalization
is called with the clause $\neg\RCube$ that represents all the states that
are in a different region than $\Cube$;
thus we exclude at least all states in the region of $\Cube$
from $\FSet{\Index}$.
%
%
%
%
No major modifications are required in the clause generalization mechanisms
but they can handle clock atoms in 
the same way the state variable literals are handled.
\begin{example}
  Consider the STTS for the system in Fig.~\ref{f:sis}(a) discussed in
  Ex.~\ref{ex:stts}.
  For the state $\Cube = \Set{\XInput\mapsto\False,\XOutput\mapsto\True,\XTimera \mapsto 1.4,\XTimerb \mapsto 1.65,...}$
  we obtain the conjunction
  $\RCube = ({\neg\XInput} \land {\XOutput} \land (\XTimera>1) \land (\XTimera < 2) \land (\XTimerb>1) \land (\XTimerb < 2) \land (\XTimerb > \XTimera) \land ...)$
  that represents all the states in the region of $\Cube$.
  The clause that excludes the whole region of $\Cube$ is simply
  $(\XInput \lor {\neg\XOutput} \lor (\XTimera \le 1) \lor (\XTimera \ge 2) \lor (\XTimerb \le 1) \lor (\XTimerb \ge 2) \lor (\XTimerb \le \XTimera) \lor ...)$.
\end{example}

The soundness of the timed IC3 algorithm can be argued as follows.
We say that a formula $\phi$ over $\SVars \cup \Clocks$ \emph{respects regions}
if for all states $\Cube$ and $\Cubep$,
$\Cube \SameRegion \Cubep$ implies
that $\Models{\Cube}{\phi}$ iff $\Models{\Cubep}{\phi}$.
By construction, a state property $\Property$
as well as $\RCube$ and $\neg\RCube$ for any state $\Cube$
all respect regions.
Furthermore,
any sub-clause of $\neg\RCube$ returned by the clause generalization
sub-routine also respects regions.
As a result all the clauses in the $\FSetNoIndex$-sets
respect regions and thus the $\FSetNoIndex$-sets
exclude whole regions only.
Furthermore,
the modified formula
$\FSet{\Index-1} \land {\neg\RCube} \land \Trans \land \Invar \land \Next{\Invar} \land \Next{\RCube}$ in Line \ref{l:ic3.block.sat}
of Fig.~\ref{f:ic3.block} is unsatisfiable
iff
no state in the same region as $\Cube$ can be reached from
the $\FSet{\Index-1}$-set;
thus excluding the whole region in Line~\ref{l:ic3.block.addclause}
is correct.

Because the number of regions is finite,
only a finite number of clauses can be added to any $\FSetNoIndex$-set.
As a result,
the argument for termination given in \cite{een2011efficient}
can be applied to the timed IC3 algorithm as well.


\section{Optimizations by Excluding Multiple Regions} 

\label{s:opt}

\begin{figwindow}[1,r,%
\makebox[45mm][c]{\includegraphics[width=40mm]{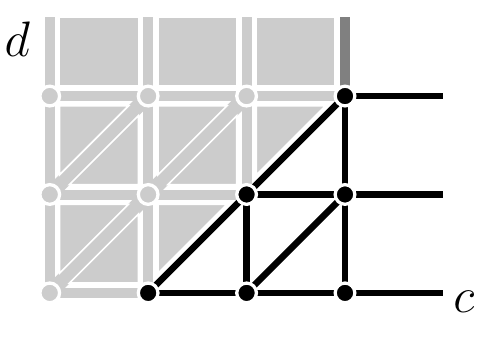}},%
{A time-predecessor clock region
\label{f:optimization}}\vspace{0.1cm}]
We now describe optimizations for timed IC3 and $\K$-induction
that sometimes allow us to exclude more regions at once
during the clause generalization and
in the region-disequality constraints, respectively.
They both exclude time-predecessor regions of a region,
i.e.~regions from whose states one can reach the other region by
just letting time pass.
As an example,
all the light regions and the dark gray region
(with $\XTimera=3$ and $\XTimerb > 2$)
in Fig.~\ref{f:optimization}
are time-predecessors of the dark gray region.
%
%
Formally,
we define that a clock valuation
$\AnotherClockValuation$ is in a time-predecessor region of
the clock valuation $\ClockValuation$,
denoted by $\AnotherClockValuation \TPrec \ClockValuation$,
if for all clocks $\Clock, \AnotherClock \in \Clocks$
all the following hold:
\end{figwindow}
\begin{enumerate}
\item
  Either
  (i) $\ClockValuation(\Clock) > \ClockMax{\Clock}$,
  (ii) 
  $\CFrac{\ClockValuation(\Clock)} = 0$ and
  $\AnotherClockValuation(\Clock) \le \ClockValuation(\Clock)$,
  or
  (iii)
  $\CFrac{\ClockValuation(\Clock)} > 0$ and
  $\AnotherClockValuation(\Clock) < \Ceil{\ClockValuation(\Clock)}$.
\item
  If
  $\ClockValuation(\Clock) \leq \ClockMax{\Clock}$ and
  $\ClockValuation(\AnotherClock) \leq \ClockMax{\AnotherClock}$,
  then
  \begin{enumerate}
  \item[(i)]
    $\CFrac{\ClockValuation(\Clock)} = \CFrac{\ClockValuation(\AnotherClock)}$
    implies
    $\AnotherClockValuation(\AnotherClock) =
     \AnotherClockValuation(\Clock) -
     \Floor{\ClockValuation(\Clock)} +
     \Floor{\ClockValuation(\AnotherClock)}$, and
  \item[(ii)]
    $\CFrac{\ClockValuation(\Clock)} < \CFrac{\ClockValuation(\AnotherClock)}$
    implies
    $\AnotherClockValuation(\AnotherClock) >
     \AnotherClockValuation(\Clock) -
     \Floor{\ClockValuation(\Clock)} +
     \Floor{\ClockValuation(\AnotherClock)}$
    and
    $\AnotherClockValuation(\AnotherClock) <
     \AnotherClockValuation(\Clock) -
     \Floor{\ClockValuation(\Clock)} +
     \Floor{\ClockValuation(\AnotherClock)} +
     1$.
  \end{enumerate}
\item
  If
  $\ClockValuation(\Clock) \leq \ClockMax{\Clock}$ and
  $\ClockValuation(\AnotherClock) > \ClockMax{\AnotherClock}$,
  then
  \begin{enumerate}
  \item[(i)]
    $\CFrac{\ClockValuation(\Clock)} = 0$
    implies
    $\AnotherClockValuation(\AnotherClock) >
     \AnotherClockValuation(\Clock) -
     \Floor{\ClockValuation(\Clock)} +
     \ClockMax{\AnotherClock}$,
    and
  \item[(ii)]
    $\CFrac{\ClockValuation(\Clock)} > 0$
    implies
    $\AnotherClockValuation(\AnotherClock) >
     \AnotherClockValuation(\Clock) -
     \Floor{\ClockValuation(\Clock)} +
     \ClockMax{\AnotherClock} - 1$.
  \end{enumerate}
\end{enumerate}
Observe that $\TPrec$ is a reflexive relation.
A state $\Cubep$ is in a time-predecessor region
of another state $\Cube$,
denoted by $\Cubep \TPrec \Cube$,
if they agree on the values of the state variables and,
when restricted to the clock variables,
$\Cubep$ is in a time-predecessor region of $\Cube$.
%

\subsubsection{Application to IC3.}
\label{ss:opt.ic3}

The timed variant of the IC3 algorithm described in a previous section
excludes an entire region from an $\FSetNoIndex$-set once
a state inside that region (and thus the whole region)
has been found to be unreachable from the previous $\FSetNoIndex$-set.
In this section,
we will argue that it is actually possible to exclude all the time-predecessor
regions at the same time.
By excluding more than one region,
the $\FSetNoIndex$-sets potentially shrink faster
which can lead to improved execution times.
%
%
%
%
This optimization to the IC3 algorithm is based on the following lemma:
\begin{lemma}
  \label{lem:opt}
  Let $\Cube$ be a valid state.
  If none of the states in the region of $\Cube$ can be reached
  from an initial state with one time elapse step
  followed by $\NumSteps$ combined steps,
  then
  none of the valid states in the time-predecessor regions of $\Cube$ can,
  either.
\end{lemma}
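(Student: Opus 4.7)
The plan is to prove the contrapositive: assume that some valid state $\Cubep$ with $\Cubep \TPrec \Cube$ is reachable from an initial state by one time-elapse step followed by $\NumSteps$ combined steps, and exhibit a reachable state in $\EClass{\Cube}$. Fix a witnessing path and write it as an initial state $t_0$, an opening time-elapse of length $\Delay_0$ producing $t_0 + \Delay_0$, and then $\NumSteps$ combined steps in which the $i$-th consists of a discrete step to some state $t_i^\star$ followed by a time-elapse of length $\Delay_i$. In particular the path terminates with the final time-elapse $t_\NumSteps^\star \to t_\NumSteps^\star + \Delay_\NumSteps = \Cubep$.

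The technical heart of the argument is an auxiliary claim about $\TPrec$: for any clock valuations $\AnotherClockValuation \TPrec \ClockValuation$ there exists $\DelayTwo \geq 0$ with $\AnotherClockValuation + \DelayTwo \SameRegion \ClockValuation$. I would establish this by a case analysis matching the three numbered clauses of the definition of $\TPrec$. The offset $\DelayTwo$ is essentially determined by the clock in $\ClockValuation$ with the smallest nonzero fractional part (together with clocks on which $\ClockValuation$ is integer-valued), and the inequalities in clauses~1--3 of the $\TPrec$ definition have been engineered precisely so that adding this single $\DelayTwo$ to every clock of $\AnotherClockValuation$ lands each clock value in the correct integer interval and recovers the fractional-part ordering required by the region definition. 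Since $\Cubep$ and $\Cube$ agree on state variables by definition of $\TPrec$ on states, this lifts to some $\Delay' \geq 0$ with $\Cubep + \Delay' \in \EClass{\Cube}$.

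With this in hand I would splice the extra delay $\Delay'$ onto the final time-elapse of the fixed path, replacing it by a time-elapse of length $\Delay_\NumSteps + \Delay'$. The resulting candidate path terminates at $t_\NumSteps^\star + \Delay_\NumSteps + \Delay' = \Cubep + \Delay'$, which lies in $\EClass{\Cube}$, and preserves the shape of ``one opening time-elapse followed by $\NumSteps$ combined steps''. The only nontrivial obligation is to verify that the extended elapse is a legal time-elapse step, i.e.\ that $\Invar$ holds throughout the interval $[t_\NumSteps^\star,\, t_\NumSteps^\star + \Delay_\NumSteps + \Delay']$. The state $t_\NumSteps^\star$ is valid because it arose from a legitimate discrete step in the original path; the state $\Cubep + \Delay'$ is valid because it lies in the same region as $\Cube$ (which is valid by assumption) and $\Invar$ respects regions. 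Convexity of $\Invar$ then propagates validity to every intermediate time, completing the construction.

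The main obstacle I anticipate is the case analysis underlying the time-predecessor characterization: one must simultaneously juggle clocks exceeding their maximum relevant value, clocks with zero versus positive fractional part, and the ordering of fractional parts across every pair of clocks, while selecting a \emph{single} offset $\DelayTwo$ that makes every condition of the region definition hold at once. The lengthy three-part definition of $\TPrec$ is tailored precisely so that this common $\DelayTwo$ exists in every case, but verifying this rigorously requires patient bookkeeping.
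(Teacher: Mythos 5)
Your proposal is correct and follows essentially the same route as the paper's proof: assume a valid time-predecessor state is reachable, use the fact that $\Cubep \TPrec \Cube$ yields a delay $\Delay'$ with $\Cubep + \Delay' \SameRegion \Cube$, extend the final time-elapse step by $\Delay'$, and invoke convexity of the state invariant (plus region-indistinguishability) to justify validity along the extended elapse. The only difference is that you sketch how to actually establish the auxiliary claim that $\TPrec$ guarantees such a $\Delay'$ exists, which the paper simply asserts as immediate from the definition.
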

\begin{proof}
  Assume that a valid state $\Pred \TPrec \Cube$ is
  reachable in that way.
  Thus,
  (i)
  $\Pred$ satisfies $\Invar$,
  (ii)
  there is a $\Delay \in \RealsNonNeg$ such that
  $\Pred+\Delay \SameRegion \Cube$
  as $\Pred \TPrec \Cube$,
  (iii)
  $\Pred+\Delay$ satisfies $\Invar$ as $\Cube$ does,
  and
  (iv)
  all the states ``in between'' $\Pred$ and $\Cube$
  (i.e.~all the states $\Pred+\Delay'$ with $0 < \Delay' < \Delay$)
  satisfy the convex $\Invar$, too.
  Therefore,
  $\Pred+\Delay$ is reachable
  from an initial state with one time elapse step
  followed by $\NumSteps$ combined steps
  by just ``extending'' the last time elapse step by $\Delay$ units.
  This gives a contradiction as $\Pred+\Delay \SameRegion \Cube$.
  \qed
\end{proof}

Any state considered by the IC3 algorithm is extracted from a model of a SMT
formula containing the system's state invariant as a conjunct and
hence satisfies the invariant.
In addition,
the $\Einit$ and $\Etrans$ formulas used in timed IC3 capture
initial states followed by one time elapse step
and
combined steps, respectively.
Thus,
Lemma~\ref{lem:opt} is applicable to any state found unreachable by
the IC3 algorithm
and
justifies the dropping of all the time-predecessor regions at the same time.
%
%
%
%
%
%
%
%
%
%
Given a state $\Cube$,
we can construct a conjunction $\TRCube$ of atoms
such that $\TRCube$ represents all the states
in the time-predecessor regions of $\Cube$.
Formally,
$\TRCube$ is obtained by instantiating the definition of $\TPrec$
for a concrete state $\Cube$
and
is the conjunction of the atoms given by the following rules:
\begin{enumerate}
\item
  For each state variable $\SVar \in \SVars$,
  add the atom $(\SVar = \Cube(\SVar))$.
\item
  For each clock $\Clock$ with
  $\ClockValue{\Clock} \le \ClockMax{\Clock}$
  and
  $\FracValue{\Clock} = 0$,
  add the atom $(\Clock \le \ClockValue{\Clock})$.
\item
  For each clock $\Clock$ with
  $\ClockValue{\Clock} < \ClockMax{\Clock}$
  and
  $\FracValue{\Clock} \neq 0$,
  add the atom
  $(\Clock < {\lceil\ClockValue{\Clock}\rceil})$.
\item
  For each pair $\Clock,\AnotherClock$ of distinct clocks
  with
  $\ClockValue{\Clock} \le \ClockMax{\Clock}$ and
  $\ClockValue{\AnotherClock} \le \ClockMax{\AnotherClock}$,
  \begin{enumerate}
  \item
    if $\FracValue{\Clock} = \FracValue{\AnotherClock}$,
    add the atoms
    $(\AnotherClock \leq {\Clock-\IntValue{\Clock}+\IntValue{\AnotherClock}})$
    and
    $(\AnotherClock \geq {\Clock-\IntValue{\Clock}+\IntValue{\AnotherClock}})$,
    again using two literals to encode equality for additional clause relaxation possibilities,
    and
  \item
    if $\FracValue{\Clock} < \FracValue{\AnotherClock}$,
    add the atoms
    $(\AnotherClock > {\Clock-\IntValue{\Clock}+\IntValue{\AnotherClock}})$
    and
    $(\AnotherClock < {\Clock-\IntValue{\Clock}+\IntValue{\AnotherClock}+1})$.
  \end{enumerate}
\item
  For each pair $\Clock,\AnotherClock$ of distinct clocks
  with
  $\ClockValue{\Clock} \le \ClockMax{\Clock}$ and
  $\ClockValue{\AnotherClock} > \ClockMax{\AnotherClock}$,
  \begin{enumerate}
  \item
    if $\FracValue{\Clock} = 0$,
    add the atom
    $(\AnotherClock > {\Clock-\IntValue{\Clock}+\ClockMax{\AnotherClock}})$;
    and
  \item
    if $\FracValue{\Clock} > 0$,
    add the atom
    $(\AnotherClock > {\Clock-\IntValue{\Clock}+\ClockMax{\AnotherClock}}-1)$.
  \end{enumerate}
\end{enumerate}
%
Now $\TRCube$ and $\neg \TRCube$ can be used instead of $\Cube$ and $\neg\Cube$ in SMT calls and as argument for clause generalization.
Observe that $\TRCube$ is also in the difference logic fraction of linear arithmetics and does not need to refer to integral or fractional parts of clocks.
\begin{example}
  Consider again the STTS for the system in Fig.~\ref{f:sis}(a) discussed in
  Ex.~\ref{ex:stts}.
  For the state
  $\Cube = \Set{\XInput\mapsto\False,\XOutput\mapsto\True,\XTimera \mapsto 3.0,\XTimerb \mapsto 2.7,...}$
  in the dark gray clock region in Fig.~\ref{f:optimization},
  we get the conjunction
  $\RCube = ({\neg\XInput} \land {\XOutput} \land (\XTimera \le 3) \land (\XTimerb > {\XTimera - 1}) \land ...)$
  representing all the states in the time-predecessor regions.
  %
  %
\end{example}

\subsubsection{Application to $\K$-induction.}

\newcommand{\Cubes}{s^{\textup{d}}}
\newcommand{\Cubeps}{u^{\textup{d}}}

The idea of excluding time-predecessor regions
can also be applied to $\K$-induction.
This is based on the following lemma,
stating that a path of combined steps can be compressed
into a shorter region-equivalent one
if a state in it is in the time-predecessor region of a later state:
\begin{lemma}
  \label{l:k-ind-opt}
  Let
  $\Cube_0
   \Cubes_1 \Cube_1 \ldots
   \Cubes_{i-1} \Cube_{i-1}
   \Cubes_i \Cube_{i} \ldots
   \Cubes_j \Cube_{j} \ldots
   \Cubes_k$
  be a path such that (i) $\Cube_0$ is an initial state,
  (ii) $\Cubes_i \TPrec \Cubes_j$,
  (iii) each step between $\Cube_{l}$ and $\Cubes_{l+1}$
  is a time elapse step,
  and
  (iv) each step between $\Cubes_{l}$ and $\Cube_{l}$
  is a discrete step.
  Then
  $\Cube_0
   \Cubes_1 \Cube_1 \ldots
   \Cubes_{i-1} \Cube_{i-1}
   \Cubeps_j \Cubep_{j} \ldots
   \Cubeps_k$
  with
  $\Cubeps_j \SameRegion \Cubes_j$ for all $j \le l \le k$
  and
  $\Cubep_{j} \SameRegion \Cube_{j}$ for all $j \le l < k$
  is also a path.
\end{lemma}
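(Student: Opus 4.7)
The plan is to shorten the path in two stages: first, replace the segment $\Cube_{i-1}\,\Cubes_i\,\Cube_i\,\ldots\,\Cubes_j$ by a single extended time-elapse step from $\Cube_{i-1}$ to a freshly chosen state $\Cubeps_j \SameRegion \Cubes_j$; second, mimic the remaining suffix $\Cube_j\,\Cubes_{j+1}\,\ldots\,\Cubes_k$ region-by-region starting from $\Cubeps_j$ by repeatedly invoking the forward bisimilarity of regions already stated in Section~\ref{s:regions}.

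\textbf{Shortening step.} Condition (iii) with $l=i-1$ gives some $\delta \in \RealsNonNeg$ with $\Cubes_i = \Cube_{i-1} + \delta$ and $\Invar$ satisfied along the whole interval. From $\Cubes_i \TPrec \Cubes_j$ I will extract a $\Delay \in \RealsNonNeg$ with $\Cubes_i + \Delay \SameRegion \Cubes_j$ by a clock-by-clock reading of the definition of $\TPrec$: clauses 1(ii)/1(iii) (or clause 1(i) when a clock is already beyond $\ClockMax{\Clock}$) fix the integer part of $\Delay$ relative to some reference clock, and clauses 2 and 3 guarantee that the fractional-part ordering among clocks of $\Cubes_i + \Delay$ coincides with that of $\Cubes_j$. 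Setting $\Cubeps_j := \Cubes_i + \Delay$, the state $\Cubes_j$ is valid (it is on the original path) and validity is region-invariant, so $\Cubeps_j$ is valid too. Convexity of $\Invar$, applied to the endpoints $\Cubes_i$ and $\Cubeps_j$, certifies that every intermediate $\Cubes_i + \mu$ with $0 \le \mu \le \Delay$ also satisfies $\Invar$. Concatenating this with the original time-elapse segment from $\Cube_{i-1}$ to $\Cubes_i$, the combined step $\Cube_{i-1} \Succ \Cubeps_j$ (with total delay $\delta + \Delay$) is a legitimate time-elapse step. The state-variable components are preserved throughout, and $\Cubes_i,\Cubes_j$ agree on $\SVars$ by the definition of $\TPrec$, so $\Cubeps_j$ agrees with $\Cubes_j$ on state variables as required.

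\textbf{Mimicking the suffix.} Iteratively over $l = j, j+1, \ldots, k-1$ apply forward bisimilarity: from $\Cubes_l \Succ \Cube_l$ (discrete step, by (iv)) together with $\Cubeps_l \SameRegion \Cubes_l$ I obtain some $\Cubep_l$ with $\Cubeps_l \Succ \Cubep_l$ and $\Cubep_l \SameRegion \Cube_l$; from $\Cube_l \Succ \Cubes_{l+1}$ (time elapse, by (iii)) together with $\Cubep_l \SameRegion \Cube_l$ I obtain some $\Cubeps_{l+1}$ with $\Cubep_l \Succ \Cubeps_{l+1}$ and $\Cubeps_{l+1} \SameRegion \Cubes_{l+1}$. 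Concatenating the untouched prefix $\Cube_0\,\Cubes_1\,\Cube_1\,\ldots\,\Cube_{i-1}$, the extended time-elapse step to $\Cubeps_j$, and this region-mimicked suffix yields the claimed shorter path; $\Cube_0$ remains an initial state because the prefix is unchanged.

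\textbf{Main obstacle.} The delicate step is constructing the witness delay $\Delay$ from the three-clause definition of $\TPrec$: the definition is essentially a symbolic encoding of ``some non-negative delay brings $\Cubes_i$ into the region of $\Cubes_j$'', but extracting an explicit $\Delay$ requires a careful case split on whether each clock exceeds $\ClockMax{\Clock}$ in $\Cubes_j$ and on whether $\CFrac{\Cubes_j(\Clock)}$ is zero. Once $\Delay$ is in hand, correctness of the new time-elapse step rests entirely on convexity of $\Invar$, and the rest of the argument is routine application of forward bisimilarity.
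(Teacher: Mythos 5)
Your proposal is correct and follows essentially the same route as the paper's own proof: extend the time-elapse step from $\Cube_{i-1}$ to $\Cubes_i$ by a delay $\Delay$ with $\Cubes_i+\Delay \SameRegion \Cubes_j$ (justified by the definition of $\TPrec$ and convexity of the state invariant), then obtain the suffix by repeated application of forward bisimilarity of region-equivalent states. Your write-up is in fact more explicit than the paper's two-sentence argument, particularly in flagging the extraction of the witness delay from the clause-by-clause definition of $\TPrec$, which the paper leaves implicit.
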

\begin{proof}
  As $\Cubes_i \TPrec \Cubes_j$ and the state invariants are convex,
  the time elapse step from $\Cube_{i-1}$ to $\Cubes_i$
  can be ``extended'' so that
  a state $\Cubes_i+\Delay \SameRegion \Cubes_j$
  is reached instead.
  Letting $\Cubeps_j$ equal $\Cubes_i+\Delay$,
  the existence of the requested postfix $\Cubeps_j \Cubep_{j} ... \Cubeps_k$
  of the path follows from the forward bisimilarity of
  the states in the same region.
  \qed
\end{proof}
Now this implies that in timed $\K$-induction we can use,
instead of the region-disequality formula $\DiffRegions{i}{j}$,
a stronger formula $\DiffRegionsPrec{i}{j}$ excluding
the state $\AtStep{i}{\Cube}$ from being in a time-predecessor
region of the state $\AtStep{j}{\Cube}$ when $i < j$.
We omit the details but
this formula can be obtained from the definition of the $\TPrec$ relation
in a similar way as the $\DiffRegions{i}{j}$ formula was obtained
from the definition of $\SameRegion$ in Sect.~\ref{s:k-ind}.



\section{Experiments} 

To determine the usefulness of the described methods, they were evaluated experimentally. Specifically, we were interested in the following questions: how do the methods perform and scale (i) in the area they were designed for, i.e. timed systems with a large amount of non-determinism; (ii) compared to each other; (iii) compared to using discrete time verification methods in a semantics-preserving way; and (iv) outside the area they were designed for, i.e. on models with a low amount of non-determinism?


\paragraph{Setup.} Timed $\K$-induction and the timed IC3 algorithm were implemented in Python, each supporting both region encoding variants.
Using a more efficient programming language like C is likely
to yield only moderate execution time improvements due to a significant fraction of the time being spent by the SMT solver.
As an SMT-solver, Yices~\cite{DBLP:conf/cav/DutertreM06} version 1.0.31 was used.
All experiments were executed on Linux computers with AMD Opteron 2435 CPUs limited to one hour of CPU time and 2 GB of RAM.

\paragraph{Industrial benchmark.}
The first benchmark used is
a model of an emergency diesel generator intended for the use in a nuclear power plant. 
The full model and two sub-models, 
which are sufficient for the verification some of the properties, were used. The numbers of clocks and state variables are 24 and 130 for the full model, 7 and 64 for the
first
 and 6 and 36 for the
 second
 sub-model.
The industrial model has been studied previously and 
found very challenging. Only some partial
results~\cite{Lahtinen-W156} have been obtained using the model checker
NuSMV~\cite{NuSMV} by abstracting model
based on its component structure and then using a discrete time version
of the model. Efforts to verify the abstracted model using the real time
model checker Uppaal~\cite{Uppaaltutorial:2004} were even less 
successful~\cite{Lahtinen-W156}. Likewise, a booleanization-based 
attempt to verify the smallest sub-model was unable to verify all
properties~\cite{KindermannJunttilaNiemela:ACSD2011}.

\begin{table}[tp]\scriptsize
\centering
\caption{Verification times in seconds for industrial benchmarks. Blank cells indicate that the respective property can not be verified on the respective size model.}
\label{t:res-tab}
\begin{tabular}{| c | c || c | c | c | c | c || c | c | c | c | c || c | c | c | c | c |}
\hline
 &  & \multicolumn{5}{ c ||}{Full} & \multicolumn{5}{ c ||}{Medium size submodel} & \multicolumn{5}{ c |}{Small submodel}\\
Property & Satisfied & \multicolumn{1}{c}{\begin{sideways}Timed IC3,\end{sideways}\begin{sideways}extended r.\end{sideways}} & \multicolumn{1}{c}{\begin{sideways}Timed IC3,\end{sideways}\begin{sideways}basic regions\end{sideways}} & \multicolumn{1}{c}{\begin{sideways}k-induction,\end{sideways}\begin{sideways}basic regions\end{sideways}} & \multicolumn{1}{c}{\begin{sideways}k-induction,\end{sideways}\begin{sideways}extended r.\end{sideways}} & \multicolumn{1}{c ||}{\begin{sideways}Booleaniza-\end{sideways}\begin{sideways}\ tion + IC3\end{sideways}} & \multicolumn{1}{c}{\begin{sideways}Timed IC3,\end{sideways}\begin{sideways}extended r.\end{sideways}} & \multicolumn{1}{c}{\begin{sideways}Timed IC3,\end{sideways}\begin{sideways}basic regions\end{sideways}} & \multicolumn{1}{c}{\begin{sideways}k-induction,\end{sideways}\begin{sideways}basic regions\end{sideways}} & \multicolumn{1}{c}{\begin{sideways}k-induction,\end{sideways}\begin{sideways}extended r.\end{sideways}} & \multicolumn{1}{c ||}{\begin{sideways}Booleaniza-\end{sideways}\begin{sideways}\ tion + IC3\end{sideways}} & \multicolumn{1}{c}{\begin{sideways}Timed IC3,\end{sideways}\begin{sideways}extended r.\end{sideways}} & \multicolumn{1}{c}{\begin{sideways}Timed IC3,\end{sideways}\begin{sideways}basic regions\end{sideways}} & \multicolumn{1}{c}{\begin{sideways}k-induction,\end{sideways}\begin{sideways}basic regions\end{sideways}} & \multicolumn{1}{c}{\begin{sideways}k-induction,\end{sideways}\begin{sideways}extended r.\end{sideways}} & \multicolumn{1}{c |}{\begin{sideways}Booleaniza-\end{sideways}\begin{sideways}\ tion + IC3\end{sideways}}\\
\hline
\hline
1 & yes & 0.55 & 0.55 & \textbf{0.5} & 0.53 & 181.95 & 0.32 & 0.33 & 0.42 & \textbf{0.29} & 16.43 & 0.21 & 0.21 & 0.29 & \textbf{0.19} & 8.51\\
2 & yes & 0.55 & 0.51 & \textbf{0.47} & 0.5 & \emph{timeout} & \textbf{0.3} & 0.32 & 0.33 & 0.34 & \emph{timeout} & \textbf{0.21} & \textbf{0.21} & 0.23 & 0.22 & 1459.6\\
3 & yes & 0.57 & 0.56 & \textbf{0.49} & \textbf{0.49} & \emph{timeout} & \textbf{0.32} & 0.33 & 0.36 & 0.36 & \emph{timeout} & 0.22 & \textbf{0.21} & 0.23 & 0.24 & \emph{timeout}\\
4 & yes & 0.54 & 0.57 & \textbf{0.48} & 0.58 & \emph{timeout} & \textbf{0.33} & 0.34 & 0.35 & 0.34 & \emph{timeout} &  &  &  &  & \\
5 & yes & 0.68 & 0.55 & 0.62 & \textbf{0.54} & 191.81 & 0.32 & 0.31 & \textbf{0.3} & 0.31 & 18.1 &  &  &  &  & \\
6 & yes & 0.57 & 0.58 & 0.6 & \textbf{0.5} & \emph{timeout} & 0.34 & \textbf{0.33} & 0.37 & 0.38 & \emph{timeout} &  &  &  &  & \\
7 & yes & 0.57 & 0.6 & \textbf{0.51} & 0.55 & \emph{timeout} &  &  &  &  &  &  &  &  &  & \\
\hline
\hline
8 & no & \emph{timeout} & \emph{timeout} & \textbf{2.21} & 2.24 & \emph{timeout} & 0.3 & \textbf{0.29} & 0.35 & 0.33 & \emph{timeout} & 0.21 & \textbf{0.2} & 0.33 & 0.23 & 2367.16\\
9 & no & 0.62 & \textbf{0.56} & 0.7 & 0.65 & 194.25 & 0.32 & 0.31 & 0.29 & \textbf{0.27} & 4.47 & 0.21 & 0.23 & 0.29 & \textbf{0.2} & 2.05\\
10 & no & \emph{timeout} & \emph{timeout} & 2.36 & \textbf{2.15} & 165.16 & \textbf{0.31} & 0.32 & 0.41 & \textbf{0.31} & 17.07 & \textbf{0.2} & 0.23 & 0.21 & \textbf{0.2} & 8.58\\
11 & no & \textbf{0.53} & 0.54 & 0.62 & 0.65 & 169.91 & 0.33 & 0.35 & 0.41 & \textbf{0.32} & 17.53 &  &  &  &  & \\
12 & no & \emph{timeout} & \emph{timeout} & 2.21 & \textbf{2.11} & \emph{timeout} & 0.32 & \textbf{0.31} & 0.34 & 0.33 & \emph{timeout} &  &  &  &  & \\
13 & no & \emph{timeout} & \emph{timeout} & 2.24 & \textbf{2.17} & \emph{timeout} & 0.32 & \textbf{0.31} & 0.46 & 0.35 & \emph{timeout} &  &  &  &  & \\
14 & no & \textbf{0.57} & \textbf{0.57} & 0.63 & 0.65 & 170.87 & \textbf{0.32} & 0.36 & 0.42 & 0.33 & 17.8 &  &  &  &  & \\
15 & no & \textbf{0.56} & 0.59 & 0.85 & 0.66 & 81.07 &  &  &  &  &  &  &  &  &  & \\
\hline
\end{tabular}

\end{table}

All four variants of the methods introduced in this paper were applied to the industrial model.
Additionally, the original IC3 implementation \cite{DBLP:conf/vmcai/Bradley11}
in combination with a semantics-preserving booleanization approach~\cite{KindermannJunttilaNiemela:ACSD2011}
was used.
Table~\ref{t:res-tab} shows the resulting execution times.
$\K$-induction did not exceed three seconds for any property.
The timed IC3-approaches performed similarly for most properties but timed out four times. 
Both real-time verification methods performed significantly better than the booleanization~/~IC3 combination, illustrating that development of specialized real time verification methods is worthwhile.

\paragraph{Random properties.}

\newcommand{\cactuswidth}{.43\textwidth}
\begin{figure}[tp]
\centering
\subfigure[Full model, property holds]{\includegraphics[width=\cactuswidth]{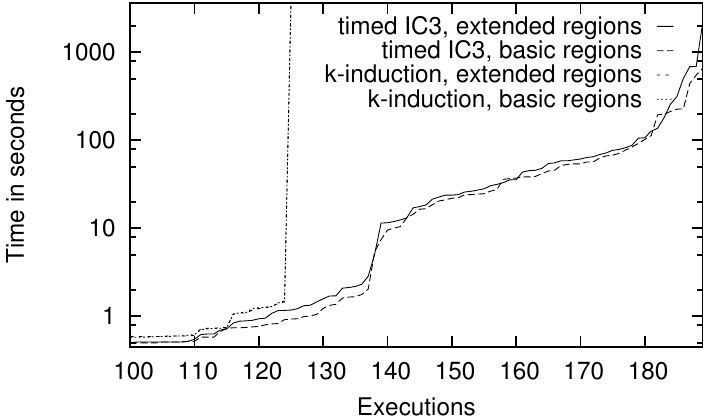}}\qquad
\subfigure[Medium size model, property holds]{\includegraphics[width=\cactuswidth]{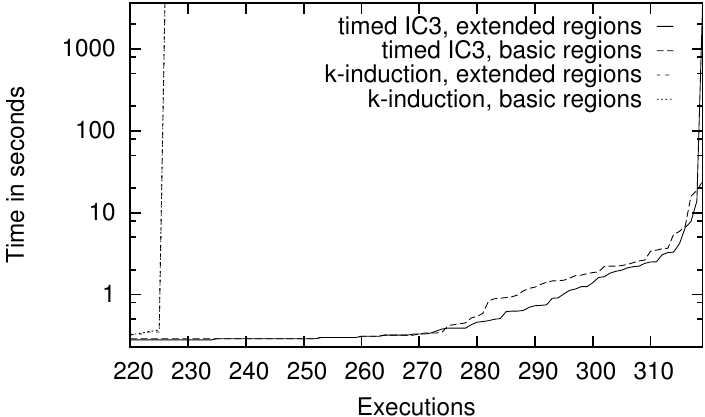}}
\subfigure[Full model, property violated]{\includegraphics[width=\cactuswidth]{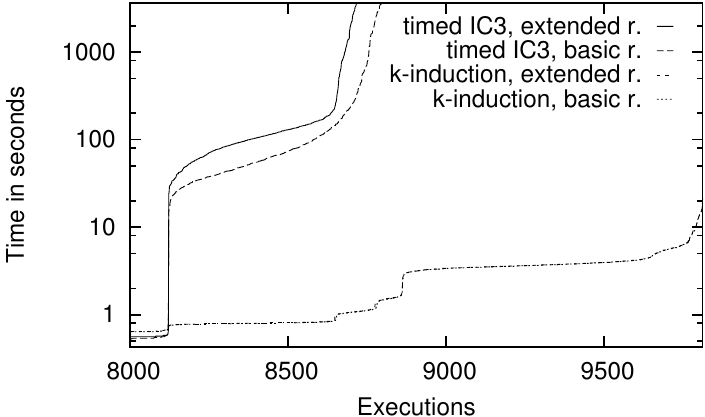}}\qquad
\subfigure[Medium size model, property violated]{\includegraphics[width=\cactuswidth]{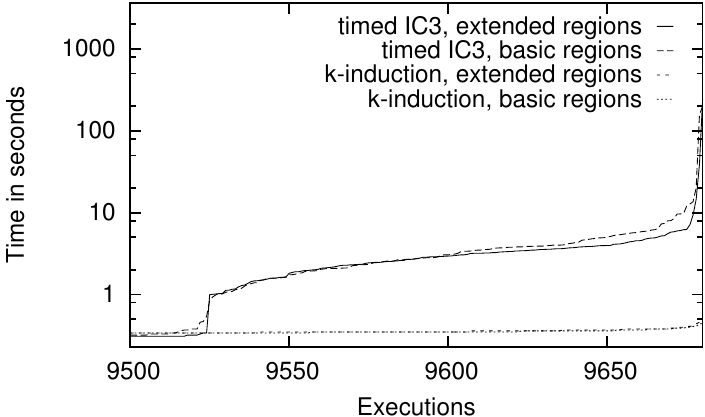}}
\caption{Time required to verify by numbers of properties for randomly generated properties}
\label{f:cactus}
\end{figure}

While the industrial benchmark showed 
that the methods work well in the area they were designed for, 
execution times were generally too low to compare the different methods and variants.
Therefore,
10000 additional random properties were generated each for the full model and the medium size sub-model, each property being a three literal clause using state variables and / or clocks. 
Figure~\ref{f:cactus} shows the resulting execution times. Note that all methods timed out for one property on the medium size model, which then could not be considered in the plots due to not being known whether it holds.
%
For violated random properties, $k$-induction performed very well, 
due to 
its bounded model checking component.
For properties that hold, in contrast, timed IC3 performed significantly better. Executing both methods (or timed IC3 and bounded model checking) in parallel could combine their strengths. 

Using time-predecessor regions made no difference for $k$-induction.
For the timed IC3 algorithm, 
their effect
depended on the size of the model used. 
A performance increase was observed for the medium size model, contrasting a performance decrease for the large model.
A likely explanation for this behavior is the large number of clocks used in the large model.
While the 
time-predecessor
region encoding uses fewer literals referring to a single clock than the original region encoding, it contains more literals comparing two clocks. 
Thus, the size of clauses grows quicker in the number of clocks used for time-predecessor regions, which eventually outweighs the gain of excluding more states at once.

\paragraph{Fischer protocol.}

\begin{figure}[tp]
\centering
\includegraphics[width=.9\textwidth]{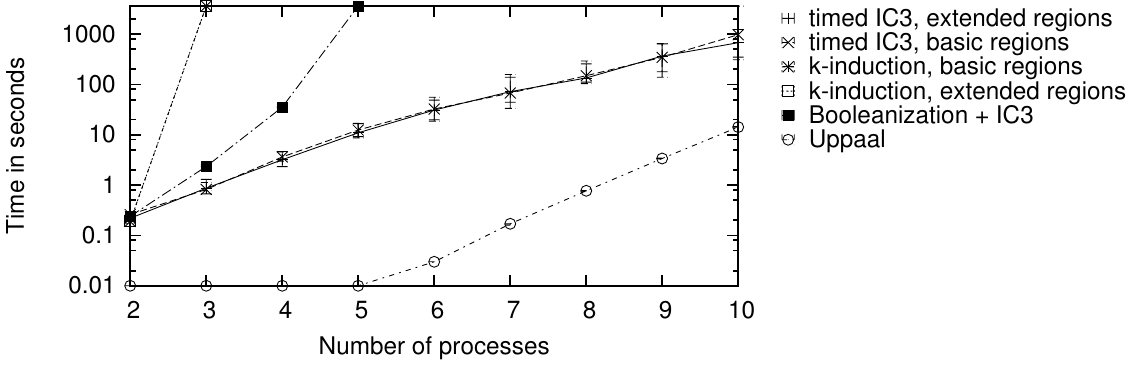}
\caption{Verification time the Fischer protocol
(min, max and median of 11 executions)}
\label{f:res.fischer}
\end{figure}

As a third benchmark, the Fischer mutual exclusion protocol, a standard benchmark for timed verification, was used.
In addition to 
the five methods used for the industrial method, Uppaal~\cite{Uppaaltutorial:2004} version 4.0.11, a model checker for networks of timed automata,  was used. 
Unlike the industrial benchmark, the Fischer protocol is fairly
deterministic 
and, thus,
could be expected to favor Uppaal over the fully-symbolic methods. 
%
%
%
Figure~\ref{f:res.fischer} shows the execution times for verifying the Fischer property with a varying number of processes.
While timed IC3 was, unsurprisingly, significantly slower than Uppaal, it scaled similarly, i.e. the runtime increased at a similar rate.
$\K$-induction timed out at three processes already while the booleanization-based approach showed exponential runtime growth and timed out at five processes.


\section{Conclusion} 


This paper introduces two verification methods for symbolic timed transition systems: a timed variant of the IC3 algorithm and an adapted version of $k$-induction. Furthermore, a potential optimization to both methods is devised.

Both methods were able to verify properties on an industrial model
verification of which had been found in previous attempts intractable
and outperformed a booleanization-based approach significantly. Random
properties on the same model revealed that the timed IC3 variant
performs better for satisfied properties while timed $k$-induction
performs better on violated properties. The experiments suggest that
executing timed IC3 in parallel with bounded model checking would yield
excellent performance for the verification of large, non-deterministic
real-time systems.

Additionally, the proposed methods were evaluated on another family of
benchmark, the Fischer mutual exclusion protocol with a varying number
of processes. This family has only a small amount of non-determinism and
the runtime of the methods was higher than that of 
the timed automata model checker Uppaal. However, the timed IC3
algorithm was found to have similar good scaling as Uppaal.

\bibliographystyle{splncs}
\bibliography{paper}

\end{document}